\renewcommand{\natural}{{\mathbb{N}}}
\newcommand{\integernonnegative}{\ensuremath{\mathbb{Z}}_{\ge 0}}
\newcommand{\real}{\ensuremath{\mathbb{R}}}
\newcommand{\until}[1]{\{1,\dots, #1\}}
\newcommand{\supscr}[2]{#1^{\textup{#2}}}
\newcommand{\diag}[1]{\operatorname{diag}(#1)}
\newcommand{\diam}{\operatorname{diam}}
\newcommand{\graph}{G}
\newcommand{\edges}{E}
\newcommand{\neighbors}{\mathcal{N}}
\renewcommand{\epsilon}{\varepsilon}
\newcommand{\argmin}{\ensuremath{\operatorname{argmin}}}
\newcommand{\argmax}{\ensuremath{\operatorname{argmax}}}
\def\pgf@circ@spst@path#1{\pgf@circ@bipole@path{spst}{#1}}
\tikzset{switch/.style = {\circuitikzbasekey, /tikz/to path=\pgf@circ@spst@path, l=#1}}
\tikzset{spst/.style = {switch = #1}}
\let\proof\@undefined                        
\let\endproof\@undefined                  
\renewcommand{\graph}{\mathcal{G}}
\newcommand{\nodes}{\mathcal{V}}
\renewcommand{\edges}{\mathcal{E}}
\newcommand{\interval}{\mathcal{I}}
\newtheorem{prop}{Proposition} 
\newtheorem{thm}{Theorem}
	\newtheorem{assumption}{Assumption}
\newtheorem{lem}{Lemma}
\newtheorem{defn}{Definition}
\newtheorem{problem}{Problem}
\let\oldReturn\Return
\renewcommand{\Return}{\State\oldReturn}
\let\oldbibliography\thebibliography
\renewcommand{\thebibliography}[1]{%
  \oldbibliography{#1}%
}
\newcommand{\marginresolved}[1]{}
\newcommand{\scott}[1]{{\color{black} #1}}
\newcommand{\mmo}[1]{{\color{black} #1}}
\newcommand{\DIO}{\operatorname{DIO}}
\newcommand{\idl}{\underline{\operatorname{id}}}
\newcommand{\idh}{\overline{\operatorname{id}}}
\renewcommand{\argmin}{\ensuremath{\operatorname*{argmin}}}
\renewcommand{\argmax}{\ensuremath{\operatorname*{argmax}}}
\begin{document}

\title{\LARGE \bf Distributed Interval Observers for Bounded-Error LTI Systems} 

\author{%
  Mohammad Khajenejad$^*$, Scott Brown$^*$, and Sonia Mart{\'\i}nez \\
  \thanks{$^*$Equal contribution. M. Khajenejad, S. Brown and S. Mart{\'\i}nez are with the
    Mechanical and Aerospace Engineering Department of the Jacobs
    School of Engineering, University of California, San Diego, La
    Jolla, San Diego, CA, USA. (e-mail: \{mkhajenejad, sab007,
    soniamd\}@ucsd.edu.)}  \thanks{This work is partially funded by
    AFOSR and ONR grants FA9550-19-1-0235 and N00014-19-1-2471.}
}

\maketitle
\thispagestyle{empty}
\pagestyle{empty}

\begin{abstract}
  This paper proposes a novel distributed interval observer design for
  linear time-invariant (LTI) discrete-time systems subject to bounded
  disturbances. In the \mmo{proposed observer} algorithm, each agent in a networked group
  exchanges locally-computed framers or interval-valued state
  estimates with neighbors, and coordinates its update via an
  intersection operation. We show that the proposed framers are
  guaranteed to bound the true state trajectory of the system by
  construction, i.e., without imposing any additional assumptions or
  constraints. Moreover, we provide necessary and sufficient
  conditions for the collective stability of the distributed observer,
  i.e., to guarantee the uniform boundedness of the observer error
  sequence. In particular, we show that such conditions can be
  tractably satisfied through a constructive and distributed approach. \mmo{Moreover, we provide an algorithm to verify some structural conditions for a given system, which guarantee the existence of the proposed observer.}
  Finally, simulation results demonstrate the effectiveness of our
  proposed method
  compared to an existing distributed observer in the literature.
\end{abstract}

\section{Introduction}

Many large scale cyber-physical systems, such as electric power grids
\cite{zhao2019dse}, intelligent transportation systems
\cite{sun2017secure}, and industrial infrastructures
\cite{cheng2018industrial}, are equipped with sensor networks,
providing \textit{in situ} and diverse measurements to monitor
them. This makes possible the construction of system state estimates,
which are essential to guarantee the safe and effective operation of
these critical applications.  Motivated by this, an intense research
activity on the analysis and design of distributed estimation
algorithms has ensued. In this way, each sensor, equipped with local
communication and processing capabilities, interacts with neighboring
nodes to compute joint estimates cooperatively.

A way to obtain such estimates is to use a \textit{centralized
  observer}, by which a super node collects all measurements from the
nodes and fuses them in an optimal way. The ubiquitous Kalman filter
\cite{KalmanF.1960} and related approaches have been used extensively
for this purpose. However, these algorithms do not scale well as the
size of the network increases and are vulnerable to single-point
failures.  This spawned research on the design of \textit{distributed
  estimation filters} (for systems subject to known stochastic
disturbances) for sensor networks communicating only locally over a
possibly time-varying network \cite{rego2019est}. While these methods
are more scalable and robust to communication failures than their
centralized counterparts, they generally have comparatively worse
estimation error. An important class of algorithms that aim to
approach the estimation performance of their Kalman filter
counterparts, are Kalman-consensus filters, which combine a local
Kalman-like update with average consensus to align agents' estimates
\cite{saber2009kcf,kamal2013icf}. 
When stochastic characterization of disturbances is not available,
however, other techniques that leverage alternative information should
be considered.

In case the disturbances are known to be bounded, \textit{interval
  observers} are a popular method for obtaining robust, guaranteed
estimates of the state, due to their simplicity 
and computational efficiency
\cite{wang2015interval,mazenc2013robust,khajenejad2022interval_ACC}.
Hence, various approaches to design centralized interval observers for
various classes of dynamical systems have been proposed
\cite{tahir2021synthesis,khajenejad2021intervalACC,khajenejad2020simultaneousCDC,mazenc2011interval,chambon2016overview,cacace2014new,efimov2013interval}.
The main idea in most of the aforementioned designs is to synthesize
appropriate centralized observer gains to obtain a robustly stable and
positive observer error system for all realizations of the existing
uncertainties\cite{tahir2021synthesis,mazenc2011interval}. This
strategy, which usually boils down to solving centralized
semi-definite programs (SDP) subject to large numbers of constraints,
leads to theoretical and computational difficulties, and thus
infeasible solutions, especially for large-scale systems
\cite{chambon2016overview,cacace2014new,efimov2013interval}. In
addition to computational issues, the communication complexity of the
centralized approach does not scale well as the size of the network
increases. A recent study \cite{li2021ipr} proposes a distributed
interval observer for \textit{block-diagonalizable} linear
time-invariant (LTI) systems, which requires a certain structure on
the dynamics and the output of the system. Another work
\cite{li2021dos} designs an observer for LTI systems under
denial-of-service attacks. In addition,~\cite{wang2022robust} proposes
an internally positive representation (IPR)-based robust distributed
interval observer for continuous-time LTI systems.  {However,} the
proposed design relies on similarity transformations and the
satisfaction of certain algebraic constraints, which could lead to
moderately-performing results. Furthermore, all of the aforementioned
works use average consensus to share estimates throughout the network,
which limits the effectiveness of the proposed methods with respect to
time of convergence and estimation quality.

\textit{Contributions.}
To overcome the aforementioned drawbacks, this work
contributes to bridging the gap between interval observer design approaches and
distributed estimation algorithms in the presence of distribution-free
uncertainties. We introduce a novel method for synthesizing scalable
distributed interval observers for discrete-time LTI systems subject to bounded
additive disturbances. We provide necessary and sufficient
conditions for the  stability of our proposed observer. Our observer is correct by construction, and we leverage this correctness to
intersect interval estimates between neighboring nodes, ensuring that the
best possible (agent) estimate
is adopted by consensus in a finite number of
iterations. Furthermore,
we introduce the intuitive notion of ``collective positive detectability over neighborhoods'' (CPDN) which, \mmo{is sufficient} to
tractably compute gains that satisfy the aforementioned stability requirement in a distributed manner. This approach involves the solution to local and feasible linear programs (LP), which is potentially
less conservative and computationally more efficient than SDP-based approaches. \mmo{Finally, we provide an algorithm to verify if CPDN holds for a given system.}

\section{Preliminaries}

This section introduces basic notation, preliminary concepts and graph
theory notions used throughout the paper.

\textit{Notation.} Let $\real^n$, $\real^{n \times
  p}$, $\natural$, $\integernonnegative$, and $\mathbb{R}_{\geq 0}$
denote the $n$-dimensional Euclidean space, the sets of $n$ by $p$
matrices, natural numbers, nonnegative integers, and nonnegative real
numbers, respectively. For $M \in
\real^{n \times p}$, let $M_i$ and $M_{ij}$ denote the
$\supscr{i}{th}$ row of $M$, and the
$\supscr{(i,j)}{th}$ entry of $M$, respectively.
  Furthermore, for $M \in
  \real^{n \times p}$, we define $M^+ \in \real^{n \times p}$,
  such that $M^+_{ij} \triangleq \max \{M_{ij}, 0 \}$,
  $M^-\triangleq M^+-M$, and $|M|\triangleq M^++M^-$. In
addition, $M \succ 0$
(or, $M \succeq 0$, resp.) denote that $M$ is positive definite
(semi-definite, resp.), and $\rho(M)$ is used to denote the spectral
radius of $M$.  All the inequalities $\leq,\geq$ are considered
element-wise. As usual, $\mathbf{e}_i$ denotes the $\supscr{i}{th}$ vector of
the standard basis of $\real^n$. Finally, for $A^1,\dots,A^N \in
\real^{n \times n}$, $\diag{A^1,\dots,A^N} \in \real^{nN \times nN}$
denotes the block-diagonal matrix with block-diagonal elements
being $A^i,i\in \until{N}$.

\textit{Graph-theoretic Notions.}  Next, we recall some definitions
from Graph Theory.  A directed graph (digraph) $\graph$ is a set of
nodes $\nodes$ and a set of directed edges $\edges \subseteq \nodes
\times \nodes$. The set of neighbors of node $i$, denoted
$\neighbors_i$, is the set of all nodes $j$ for which there is an edge
$(i,j)\in\edges$.  We will assume that $i \in \neighbors_i$. A path
from node $i$ to node $j$ is a sequence of nodes starting with $i$ and
ending with $j$, such that any two consecutive nodes are joined by a
directed edge. The $d$-hop neighbors of agent $i$, denoted
$\neighbors_i^d$, is the set of nodes connected to $i$ by a path of
length no more than $d$.  The diameter of a graph is the largest
distance between any two nodes, i.e. $\diam\graph \triangleq
\max_{i,j}d(i,j)$, where $d(i,j)$ denotes the length of the shortest
path between $i$ and $j$.

\textit{Multi-dimensional Intervals.}  Finally, we introduce some
definitions and results regarding multi-dimensional intervals. A
(multi-dimensional) interval {$\interval \triangleq
  [\underline{s},\overline{s}] \subset \real^n$} is the set of all
vectors $x \in \real^n$ that satisfy $\underline{s} \le x \le
\overline{s}$.
\begin{prop}\cite[Lemma 1]{efimov2013interval}\label{prop:bounding}
  Let $A \in \real^{p \times n}$ and $\underline{x} \leq x \leq
  \overline{x} \in \real^n$. Then, $A^+\underline{x}-A^{-}\overline{x}
  \leq Ax \leq A^+\overline{x}-A^{-}\underline{x}$. As a corollary, if
  $A$ is non-negative, $A\underline{x} \leq Ax \leq A\overline{x}$.
\end{prop}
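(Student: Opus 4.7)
The plan is to prove Proposition \ref{prop:bounding} by exploiting the decomposition $A = A^+ - A^-$, where both $A^+$ and $A^-$ are entrywise non-negative by the construction given in the notation paragraph (with $A^+_{ij} = \max\{A_{ij},0\}$ and $A^-_{ij} = \max\{-A_{ij},0\}$). The essence of the argument is that multiplication by a non-negative matrix preserves componentwise inequalities, so splitting $A$ into its positive and negative parts lets us handle each part separately against the tighter of $\underline{x}$ or $\overline{x}$.

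First, I would establish the following auxiliary monotonicity fact: if $M \in \real^{p \times n}$ with $M \ge 0$ and $u \le v$ in $\real^n$, then $Mu \le Mv$. This follows immediately from $(Mv - Mu)_i = \sum_{j=1}^n M_{ij}(v_j - u_j)$, which is a sum of products of non-negative terms and hence non-negative. This step is essentially definitional but is what makes the rest work.

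Next, I would write $Ax = A^+ x - A^- x$ and apply the auxiliary fact twice. Since $A^+ \ge 0$ and $\underline{x} \le x \le \overline{x}$, we get $A^+ \underline{x} \le A^+ x \le A^+ \overline{x}$. Similarly, since $A^- \ge 0$ and $\underline{x} \le x \le \overline{x}$, we get $A^- \underline{x} \le A^- x \le A^- \overline{x}$, so negating gives $-A^- \overline{x} \le -A^- x \le -A^- \underline{x}$. Adding the two chains of inequalities yields
\begin{equation*}
A^+ \underline{x} - A^- \overline{x} \le A^+ x - A^- x = Ax \le A^+ \overline{x} - A^- \underline{x},
\end{equation*}
which is exactly the claim. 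For the corollary, when $A \ge 0$ we have $A^+ = A$ and $A^- = 0$, and the bounds collapse to $A\underline{x} \le Ax \le A\overline{x}$.

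There is no substantial obstacle here: the only subtlety is being careful about the sign flip when moving $A^- x$ to the other side, and making sure to use the lower bound $\underline{x}$ (resp.\ upper bound $\overline{x}$) in the term with $A^-$ on the upper (resp.\ lower) side of the inequality, since $-A^-$ reverses the direction. The proof is a few lines and requires no additional assumptions beyond the stated interval containment.
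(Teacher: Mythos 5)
Your proof is correct and complete: the decomposition $A = A^+ - A^-$ into entrywise non-negative parts, combined with the monotonicity of multiplication by a non-negative matrix, is exactly the standard argument for this bound, and the sign handling for the $A^-$ terms is done correctly. The paper itself states Proposition~\ref{prop:bounding} as an imported result (citing Lemma~1 of \cite{efimov2013interval}) and gives no proof of its own, so there is nothing to compare against; your argument fills that gap with no missing steps.
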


\section{Problem Formulation} \label{sec:Problem}
\noindent\textbf{\textit{System Assumptions.}}
Consider a multi-agent system \mmo{(MAS)} consisting of $ \nodes \triangleq
\until{N}$ agents, which interact over a time-invariant communication
graph $\graph = (\nodes, \edges)$. The agents are able to obtain
distributed measurements of a target as described by the following LTI
dynamics:
\begin{align}\label{eq:system}
\begin{array}{ll}
  \mathcal{P}:
  \begin{cases}
    x_{k+1} = Ax_k+Bw_k, &  \\
    y^i_k = C^ix_k+D^iv^i_k, & i \in \nodes, \quad k \in
    \integernonnegative,
  \end{cases}
\end{array}
\end{align}
where $x_k \in \real^{n}$ is the continuous state of the target system and
$w_k \in \interval_{w} \triangleq [\underline{w},\overline{w}] \subset
\real^{n_w}$ is bounded process disturbance. Furthermore, at time step
$k$, every agent $i \in \nodes$ takes a measurement
$y^i_k \in \real^{m_i}$, known only to itself, which is perturbed by
$v^i_k \in \interval_{v}^i \triangleq [\underline{v}^i,\overline{v}^i]
\subset \real^{n_v^i}$, a bounded sensor (measurement) noise
signal.
 Finally,
$A \in \real^{n \times n},B \in \real^{n \times n_w},C^i \in
\real^{m_i \times n}$ and $D^i \in \real^{m_i \times n^i_v}$ are system
matrices known to all agents.  The \mmo{MAS}'s goal is
to estimate the trajectories of 
\eqref{eq:system} in a distributed manner, when they are initialized
in an interval
$\interval_x \triangleq [\underline{x}_0,\overline{x}_0] \subset
\real^{n}$, with $\underline{x}_0,\overline{x}_0$ known to all agents.
Next, we define the notions of \textit{framers},
\textit{correctness}, and \textit{stability}, used throughout
the paper.

\begin{defn}[Framers]\label{defn:framers}
  For an agent $i\in\nodes$, the sequences $\{\overline{x}^i_k\}_{k
    \ge 0}$ and $\{\underline{x}^i_k\}_{k \ge 0} \subseteq \real^n$
  are called {upper} and {lower individual framers for (the
    state of) $\mathcal{P}$} if
    $ \underline{x}^i_k \leq x_k \leq \overline{x}^i_k, \quad$ for all $k
    \ge 0$.
  \mmo{Moreover,} we define the {individual lower} and {upper framer errors} \mmo{as follows:}
  \begin{align}\label{eq:error_1}
    & \underline{e}^i_k\triangleq x_k-\underline{x}^i_k, \quad
    \overline{e}^i_k \triangleq \overline{x}^i_k-x_k, \quad \forall k
    \ge 0.
  \end{align}
  \mmo{Given an MAS} 
  with target system $\mathcal{P}$ and
  communication graph $\graph$, a {distributed interval framer} is a distributed algorithm over $\graph$ that
  allows each agent $i$ to cooperatively compute upper and lower
  individual framers for
$\mathcal{P}$. Finally,
\begin{align}\label{eq:error_2}
  e_k \triangleq \begin{bmatrix}(\underline{e}^1_k)^\top &
    (\overline{e}^1_k)^\top & \cdots & (\underline{e}^N_k)^\top &
    (\overline{e}^N_k)^\top \end{bmatrix}^\top \in \real^{2Nn}.
\end{align}
is called the collective framer error, which is the vector of all individual lower and upper framer errors.
\end{defn}
\begin{defn}[Distributed Interval Observer]\label{defn:stability}
  A \emph{distributed interval framer} is input-to-state (ISS) stable
  if the collective framer error is bounded as follows:
  \begin{align*}
    \|e_k\| \leq
    \beta(\|e_0\|,k)+\mmo{\gamma}\big(\max_{0\le l \le k}|\Delta_l|\big),\quad \forall k \in
    \integernonnegative,
  \end{align*}
  where $\Delta_l \hspace{-.1cm}\triangleq [ w_l^\top \ (v^1_l)^\top \cdots
  (v^N_l)^\top]^\top\in\real^{n_w + Nn_v}$, and $\beta$ and {$\gamma$} are
  functions of classes $\mathcal{KL}$ and
  $\mathcal{K}_{\infty}$,
  respectively. 
  An ISS distributed interval framer is a
  {distributed interval observer.}
\end{defn}
The observer design problem can be stated as follows:
\begin{problem}\label{prob:SISIO}
  Given a multi-agent system and the LTI system in~\eqref{eq:system},
  design a distributed interval observer for~$\mathcal{P}$.
\end{problem}

\section{Proposed Distributed Interval Observer}
\label{sec:observer}
In this section, we describe our novel distributed interval observer
design, a necessary and sufficient condition for stability of the
proposed observer, and an LP-based distributed procedure for computing
stabilizing observer gains.

\subsection{Distributed Observer and its Correctness} \label{sec:obsv}
To address
Problem~\ref{prob:SISIO}, we propose a two-step distributed interval
framer (cf.~Definition~\ref{defn:framers}) for $\mathcal{P}$.  The
$\DIO$ Algorithm~\ref{alg1} provides a pseudocode description of our
observer, the details of which are further explained in this section \mmo{as follows}.

\noindent\textbf{\textit{ i) Propagation and Measurement Update:}}
At every $k+1 \in \integernonnegative$, given $\underline{x}^i_{k}$,
$\overline{x}^i_{k}$, $y^i_k$, and $y^i_{k+1}$, each agent~$i \in
\nodes$ performs a state propagation and a local measurement update
step using observer gains $L^i,\Gamma^i \in \real^{n \times m_i}$
which will be designed to satisfy desired observer properties:
\vspace{-.3cm}

{\small
\begin{gather}\label{eq:observer}
\begin{split}
   \hspace{-.1cm} \underline{x}^{i,0}_{k+1}
    &\hspace{-.1cm} =\hspace{-.1cm}  \tilde{A}^{i+}\underline{x}_k^i
        \hspace{-.1cm} -\hspace{-.1cm}  \tilde{A}^{i-}\overline{x}_k^i
       \hspace{-.1cm} +\hspace{-.1cm} (T^iB)^+\underline{w}
        \hspace{-.1cm}- \hspace{-.1cm}(T^iB)^-\overline{w}\hspace{-.1cm}+ \hspace{-.1cm}L^iy^i_k\hspace{-.1cm} +\hspace{-.1cm}  \Gamma^iy^i_{k+1} \\
        &
        \hspace{-.1cm} +\hspace{-.1cm}  ((L^iD^i)^+ \hspace{-.1cm} + \hspace{-.1cm} (\Gamma^iD^i)^+)\overline{v}
         \hspace{-.1cm} -\hspace{-.1cm}  ((L^iD^i)^- \hspace{-.1cm} +\hspace{-.1cm}  (\Gamma^iD^i)^-)\underline{v}
        , \\
  \hspace{-.1cm}  \overline{x}^{i,0}_{k+1}
    &\hspace{-.1cm} = \hspace{-.1cm} \tilde{A}^{i+}\overline{x}_k^i
        \hspace{-.1cm} -\hspace{-.1cm}  \tilde{A}^{i-}\underline{x}_k^i
       \hspace{-.1cm} + \hspace{-.1cm}(T^iB)^+\overline{w}
       \hspace{-.1cm} - \hspace{-.1cm}(T^iB)^-\underline{w}\hspace{-.1cm}+\hspace{-.1cm} L^iy^i_k\hspace{-.1cm} +\hspace{-.1cm}  \Gamma^iy^i_{k+1} \\
        &
        \hspace{-.1cm} +\hspace{-.1cm}  ((L^iD^i)^+\hspace{-.1cm}  +\hspace{-.1cm}  (\Gamma^iD^i)^+)\underline{v}
         \hspace{-.1cm} - \hspace{-.1cm} ((L^iD^i)^- \hspace{-.1cm} + \hspace{-.1cm} (\Gamma^iD^i)^-)\overline{v}
        ,
\end{split}
\end{gather}
}
where $T^i \triangleq I_n-\Gamma^iC^i$ and $\tilde{A}^i \triangleq
T^iA-L^iC^i$.
\mmo{Further}, $\underline{e}^{i,0}_k \triangleq x_k-\underline{x}^{i,0}_k$
and $\overline{e}^{i,0}_k \triangleq \overline{x}^{i,0}_k-x_k$
\mmo{are} the corresponding errors, and $e^0_k$ \mmo{is} the vector of all agents'
errors, as in \eqref{eq:error_2}.

\noindent\textbf{\textit{ ii) Network Update:}}
After the measurement update, each agent~$i$ \mmo{iteratively} shares its interval
estimate with its neighbors in the network, \mmo{and updates it by} taking the tightest
interval from all neighbors via intersection:  
\begin{align}\label{eq:network_update}
    \begin{split}
      \underline{x}^{i,t}_{k} = \max_{j\in\neighbors_i}\underline{x}^{j,t-1}_k,
      \quad \underline{x}^i_k=\underline{x}^{i,d}_{k}, \\
      \overline{x}^{i,t}_{k} = \min_{j\in\neighbors_i}\overline{x}^{j,t-1}_k,
      \quad \overline{x}^i_k=\overline{x}^{i,d}_{k},
    \end{split}
\end{align}
$\forall t \in \until{d}$, where $d \in \natural$ is
the number of network-update iterations.  Note that in case $d > 1$,
this iterative procedure computes the intersection of intervals with
the $d$-hop neighbors of each agent. \mmo{Consequently,}
each agent~$i$ obtains the \mmo{following} information:
\begin{align*}
  \underline{x}^i_k = \max_{j\in\neighbors_i^d}\underline{x}^{j,0}_k
  \quad \text{and} \quad \overline{x}^i_k =
  \min_{j\in\neighbors_i^d}\overline{x}^{j,0}_k.
\end{align*}
We will use this fact as a compact representation of the network
update~\eqref{eq:network_update}. Next, we show that the proposed
$\DIO$ algorithm constructs a distributed interval framer in the sense
of Definition \ref{defn:framers} for the plant $\mathcal{P}$.
\begin{algorithm}[H]
  \caption{Distributed Interval Observer ($\DIO$) at node $i$.}
  \label{alg1}
  \begin{algorithmic}[1]
    \renewcommand{\algorithmicrequire}{\textbf{Input:}}
    \renewcommand{\algorithmicensure}{\textbf{Output:}}
    \Require 
    $\underline{x}^i_0$, $\overline{x}^i_0$, $d$; \textbf{Output:}
    $\{\underline{x}^i_{k}\}_{k\ge0}$,
    $\{\overline{x}^i_{k}\}_{k\ge0}$; \State Compute $L^i$ $\Gamma^i$,
    and $T^i$ using \eqref{eq:stab_LMI}; \Loop \\
    $\quad \;\;$Compute $\underline{x}^{i,0}_{k+1}$ and
    $\overline{x}^{i,0}_{k+1}$ using \eqref{eq:observer};
    \For{$t = 1$  to $d$}
    \State Send $\underline{x}^{i,t-1}_{k+1}$ and
      $\overline{x}^{i,t-1}_{k+1}$ to $j \in \neighbors_i$;
     \State     $\displaystyle\underline{x}^{i,t}_{k+1} \gets\,
             \max_{j\in\neighbors_i}\underline{x}^{j,t-1}_{k+1};
             \,\,\overline{x}^{i,t}_{k+1} \gets\, \min_{j\in\neighbors_i}\overline{x}^{j,t-1}_{k+1};
           $
    \EndFor
        \State $\underline{x}^{i}_{k+1} \gets\,
          \underline{x}^{i,d}_{k+1}$; $\overline{x}^{i}_{k+1} \gets\,
          \overline{x}^{i,d}_{k+1}$; $k \gets k+1$;
     \EndLoop
    \Return $\{\underline{x}^i_{k}\}_{k\ge0}$, $\{\overline{x}^i_{k}\}_{k\ge0}$
  \end{algorithmic}
\end{algorithm}
\begin{lem}[Distributed Framer Construction]\label{lem:ind_framer}
  The $\DIO$ algorithm is a distributed interval framer
  for \eqref{eq:system}. 
\end{lem}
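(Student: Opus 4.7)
The plan is an induction on the time index $k$, with inductive hypothesis that $\underline{x}^i_k \leq x_k \leq \overline{x}^i_k$ for every $i \in \nodes$. The base case $k=0$ is immediate, since $x_0 \in \interval_x$ and every agent is initialized with the endpoints of $\interval_x$.

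For the inductive step, I would treat the propagation/measurement update \eqref{eq:observer} and the consensus rounds \eqref{eq:network_update} separately. For the first, the key algebraic maneuver is to rewrite the true state $x_{k+1}$ in a form compatible with the chosen gains: starting from $x_{k+1} = A x_k + B w_k$, using $T^i = I_n - \Gamma^i C^i$ and the output equations $C^i x_{k+1} = y^i_{k+1} - D^i v^i_{k+1}$ and $C^i x_k = y^i_k - D^i v^i_k$, and introducing $\pm L^i C^i x_k$, one arrives at the identity
\begin{align*}
x_{k+1} = \tilde{A}^i x_k + T^i B w_k + L^i y^i_k + \Gamma^i y^i_{k+1} - L^i D^i v^i_k - \Gamma^i D^i v^i_{k+1}.
\end{align*}
Applying Proposition~\ref{prop:bounding} term by term---to $\tilde{A}^i x_k$ via the inductive bounds on $x_k$, to $T^i B w_k$ via $\underline{w} \leq w_k \leq \overline{w}$, and to the two noise contributions via the $(\cdot)^+/(\cdot)^-$ decomposition---produces upper and lower envelopes that coincide with the closed-form definitions of $\overline{x}^{i,0}_{k+1}$ and $\underline{x}^{i,0}_{k+1}$ in \eqref{eq:observer}, whence $\underline{x}^{i,0}_{k+1} \leq x_{k+1} \leq \overline{x}^{i,0}_{k+1}$ for every~$i$.

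For the consensus stage, I would run a secondary induction on $t \in \until{d}$: if at round $t-1$ every $j \in \neighbors_i$ obeys $\underline{x}^{j,t-1}_{k+1} \leq x_{k+1} \leq \overline{x}^{j,t-1}_{k+1}$, monotonicity of $\max$ and $\min$ yields $\max_{j\in\neighbors_i}\underline{x}^{j,t-1}_{k+1} \leq x_{k+1} \leq \min_{j\in\neighbors_i}\overline{x}^{j,t-1}_{k+1}$, so the framer property persists through \eqref{eq:network_update}. Iterating up to $t = d$ gives $\underline{x}^i_{k+1} \leq x_{k+1} \leq \overline{x}^i_{k+1}$, closing the outer induction. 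The only delicate point is the sign bookkeeping when invoking Proposition~\ref{prop:bounding} on the terms $-L^iD^iv^i_k$ and $-\Gamma^iD^iv^i_{k+1}$ carrying a leading minus sign, so that the upper and lower envelopes do not swap; beyond this, the argument is mechanical, and notably no condition on the communication graph or on the gains $L^i, \Gamma^i$ is needed---consistent with the claim that correctness holds by construction.
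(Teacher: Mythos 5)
Your proposal is correct and follows essentially the same route as the paper's proof: the same rewriting of $x_{k+1}$ into the identity $x_{k+1} = \tilde{A}^i x_k + T^i B w_k + \Gamma^i(y^i_{k+1} - D^i v^i_{k+1}) + L^i(y^i_k - D^i v^i_k)$ via $T^i = I_n - \Gamma^i C^i$ and the added zero term, the same term-by-term application of Proposition~\ref{prop:bounding}, and the same observation that the max/min network update preserves the framer property, closed by induction on $k$. Your explicit inner induction over the consensus rounds $t$ is merely a more detailed spelling-out of what the paper states in one line.
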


\begin{proof}
  From \eqref{eq:system} and the fact that $T^i =I_n-\Gamma^iC^i$, 
  we have
  \begin{align}\label{eq:interm_1}
    x_{k+1}\hspace{-.1cm}=\hspace{-.1cm}(\Gamma^iC^i\hspace{-.1cm}+\hspace{-.1cm}T^i)x_{k+1}\hspace{-.1cm}=\hspace{-.1cm}T^i(Ax_k\hspace{-.1cm}+\hspace{-.1cm}Bw_k)\hspace{-.1cm}+\hspace{-.1cm}\Gamma^iC^ix_{k+1}.
  \end{align}
  Plugging $C^ix_{k+1}=y^i_{k+1}-D^iv^i_{k+1}$ into, as well as adding
  the \emph{zero term} $L^i(y^i_k-C^ix_k-D^iv^i_k)$ to the right
  hand side of \eqref{eq:interm_1} results in
  {\small
    \begin{align}\label{eq:sys_equiv}
      x_{k+1}\hspace{-.1cm}=\hspace{-.1cm}\tilde{A}^ix_k\hspace{-.1cm}+\hspace{-.1cm}T^iBw_k\hspace{-.1cm}+\hspace{-.1cm}\Gamma^i(y^i_{k+1}\hspace{-.1cm}-\hspace{-.1cm}D^iv^i_{k+1})\hspace{-.1cm}+\hspace{-.1cm}L^i(y^i_k\hspace{-.1cm}-\hspace{-.1cm}D^iv^i_{k}).
    \end{align}
  }
  Applying Proposition \ref{prop:bounding} to all the uncertain terms
  in the right hand side of \eqref{eq:sys_equiv} shows that for each $i\in\nodes$,
  \begin{align*}
    \underline{x}^{i}_{k} \leq x_{k} \leq \overline{x}^{i}_{k}
    \implies \underline{x}^{i,0}_{k+1} \leq x_{k+1} \leq
    \overline{x}^{i,0}_{k+1},
  \end{align*}
  where $\underline{x}^{i,0}_{k+1},\overline{x}^{i,0}_{k+1}$ are given
  in \eqref{eq:observer}. This means that individual framers/interval
  estimates are correct. When the framer condition is satisfied for
  all nodes, the intersection of all the individual estimates of
  neighboring nodes (cf. \eqref{eq:network_update}) also results in
  correct interval framers, i.e.
  \begin{align*}
    \underline{x}^{i,0}_{k} \leq x_{k} \leq \overline{x}^{i,0}_{k},
    \; \forall i \in \nodes
    \implies \underline{x}^{i}_{k} \leq x_{k} \leq \overline{x}^{i}_{k},
    \; \forall i \in \nodes.
  \end{align*}
  Since the initial interval is known to all~$i$, by
  induction~\eqref{eq:observer}-\eqref{eq:network_update} constructs a
  correct distributed interval framer for~\eqref{eq:system}.
\end{proof}
\subsection{Collective Input-to-State Stability}\label{subsec:CISS}
In this subsection, we investigate conditions on the observer gains
$L^i$, $T^i$, and $\Gamma^i$, as well as the communication graph
$\graph$, that lead to an ISS distributed observer (cf. Definition
\ref{defn:stability}), which equivalently results in a uniformly
bounded observer error sequence $\{e_k\}_{k \ge 0}$ (given in
\eqref{eq:error_1}--\eqref{eq:error_2}), in the presence of bounded
noise.  For ease of exposition, in what follows, we ignore noise terms
and focus on asymptotic stability of the noiseless error dynamics,
which we will show implies collective input-to-state stability.

\noindent\textbf{\textit{Switched System Perspective.}}
We begin by stating a preliminary result that expresses the observer
error dynamics in the form of a specific switched system.
\begin{lem}
  \label{lem:switched}
  The collective error
  $e_k$ 
  has dynamics
  \begin{align}\label{eq:error-switched}
    e_{k+1} = H_k\hat A e_k,
  \end{align}
  where $\hat{A} \triangleq
  \diag{\hat{A}^1,\dots,\hat{A}^N}$,
  $\hat{A}^i\triangleq \begin{bmatrix}(\tilde{A}^i)^+ &
    (\tilde{A}^i)^- \\ (\tilde{A}^i)^- &
    (\tilde{A}^i)^+ \end{bmatrix}$.  $H_k \in \{0,1\}^{2Nn\times 2Nn}$
  is a binary matrix which selects a single minimizer or
  maximizer
    of the framers, i.e.,
  \begin{align}  \label{eq:H}
    \begin{split}
      (H_k)_{\idl(i,s),\idl(j^*,s)}
      = 1
      & \Leftrightarrow j^* = \min (\argmax_{j\in\neighbors_i^d} (\underline{x}^{j,0}_k)_s), \\
      (H_k)_{\idh(i,s),\idh(j^*,s)} =
      1
      & \Leftrightarrow j^* = \min (\argmin_{j\in\neighbors_i^d} (\overline{x}^{j,0}_k)_s), \\
    \end{split}
  \end{align}
  for $s \in \until{n}$ and $i\in\nodes$, where
  $\idl(i,s) = 2n(i-1)+s-1$ and
  $\idh(i,s) = 2n(i-1)+s+n-1$ encode the indices
  associated with the upper and lower framers of state $s$
  at node $i$. Furthermore, $H_k\hat A
  \in \mathcal{F} \subseteq \real^{2Nn\times 2Nn}$, where
  \begin{align*}
    \mathcal{F} &\triangleq \Big\{\begin{bmatrix} \underline
      a_{1,1}^\top & \overline a_{1,1}^\top &
       \underline a_{1,2}^\top & \overline a_{1,2}^\top & \cdots & \underline
      a_{N,n}^\top & \overline a_{N,n}^\top \end{bmatrix}^\top \\ &\quad \quad: \,
    \underline a_{i,s} \in \underline{\mathcal{F}}^i_{s}, \overline a_{i,s} \in \overline{\mathcal{F}}^i_{s}, \ s \in \until{n}, \ i\in \nodes \Big\} , \\
    \underline{\mathcal{F}}^i_{s} &\triangleq \left\{\mathbf{e}_j^\top
      \otimes [(\tilde A^j)^+ \ (\tilde A^j)^-]_s \in \real^{1\times 2Nn} \ : \
      {j\in\neighbors_i^d} \right\}, \\
    \overline{\mathcal{F}}^i_{s} &\triangleq \left\{\mathbf{e}_j^\top
      \otimes [(\tilde A^j)^- \ (\tilde A^j)^+]_s \in \real^{1\times 2Nn} \ : \ {j\in
        \neighbors_i^d} \right\}.
  \end{align*}
\end{lem}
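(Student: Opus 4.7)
The plan is to derive an exact closed-form for the post-measurement error, re-express the network update as a binary row-selector, and then check that the composite matrix lies in $\mathcal F$.

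For the error dynamics, I would use the equivalent plant identity \eqref{eq:sys_equiv} together with the observer update \eqref{eq:observer} and compute $\underline e^{i,0}_{k+1} \triangleq x_{k+1}-\underline x^{i,0}_{k+1}$. The $L^i y^i_k$ and $\Gamma^i y^i_{k+1}$ contributions cancel immediately, and in the noise-free case the state-dependent residual simplifies to $\tilde A^i x_k - \tilde A^{i+}\underline x^i_k + \tilde A^{i-}\overline x^i_k$. Using the decomposition $\tilde A^i = \tilde A^{i+}-\tilde A^{i-}$, this rearranges to $\tilde A^{i+}\underline e^i_k + \tilde A^{i-}\overline e^i_k$, and a symmetric calculation yields $\overline e^{i,0}_{k+1} = \tilde A^{i+}\overline e^i_k + \tilde A^{i-}\underline e^i_k$. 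Stacking the two rows reproduces the block $\hat A^i$, and collecting over $i\in\nodes$ yields $e^0_{k+1}=\hat A\,e_k$.

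For the network stage, I would translate \eqref{eq:network_update} iterated $d$ times into the component-wise identities $(\underline x^i_{k+1})_s=\max_{j\in\neighbors_i^d}(\underline x^{j,0}_{k+1})_s$ and $(\overline x^i_{k+1})_s=\min_{j\in\neighbors_i^d}(\overline x^{j,0}_{k+1})_s$. Subtracting appropriately from $(x_{k+1})_s$ gives $(\underline e^i_{k+1})_s=\min_j(\underline e^{j,0}_{k+1})_s$ and $(\overline e^i_{k+1})_s=\min_j(\overline e^{j,0}_{k+1})_s$. Choosing the canonical argmin (smallest index) defines precisely the binary entries of $H_k$ in \eqref{eq:H}, so that $e_{k+1}=H_k\,e^0_{k+1}=H_k\hat A\,e_k$, as claimed.

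Finally, for the $\mathcal F$-characterization I would simply read rows off the block-diagonal structure: the row of $\hat A$ with global index $\idl(j,s)$ equals $\mathbf e_j^\top \otimes [(\tilde A^j)^+\ (\tilde A^j)^-]_s$ and the row with index $\idh(j,s)$ equals $\mathbf e_j^\top \otimes [(\tilde A^j)^-\ (\tilde A^j)^+]_s$. Left-multiplication by the binary $H_k$ replaces row $\idl(i,s)$ (resp.\ $\idh(i,s)$) of $\hat A$ by the row of index $\idl(j^*,s)$ (resp.\ $\idh(j^*,s)$) for some $j^*\in\neighbors_i^d$, which by construction belongs to $\underline{\mathcal F}^i_{s}$ (resp.\ $\overline{\mathcal F}^i_{s}$), so $H_k\hat A\in\mathcal F$. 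The only mildly delicate step is the first: one must carefully pair the $(\cdot)^+$ and $(\cdot)^-$ groups appearing in \eqref{eq:observer} with the noiseless part of \eqref{eq:sys_equiv} and invoke $\tilde A^i=\tilde A^{i+}-\tilde A^{i-}$ to obtain the symmetric block $\hat A^i$; everything that follows is essentially index bookkeeping.
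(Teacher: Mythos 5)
Your proposal is correct and follows essentially the same route as the paper's proof: cancel the output terms between the equivalent plant representation \eqref{eq:sys_equiv} and the observer update \eqref{eq:observer}, use $\tilde A^i=\tilde A^{i+}-\tilde A^{i-}$ to obtain $e^0_{k+1}=\hat A e_k$, and encode the max/min network update \eqref{eq:network_update} as the binary row-selector $H_k$. The paper compresses all of this into two sentences, so your version is simply a fully worked-out rendering of the same argument, including the row-membership check $H_k\hat A\in\mathcal F$ that the paper leaves implicit.
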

\begin{proof}
  Notice that $\tilde{A}^i = \tilde{A}^{i+} - \tilde{A}^{i-}$
  in \eqref{eq:sys_equiv}. Rewriting in terms of the
  error by adding (or subtracting) \eqref{eq:sys_equiv} to
  \eqref{eq:observer} and \eqref{eq:network_update}, then setting the
  noise to zero, we obtain ${\color{magenta} e^0_{k+1}} = \hat A
  e_k$ and $e_k = H_k e^0_k$.
  Combining these yields \eqref{eq:error-switched}.
\end{proof}
Recall that the switching in \eqref{eq:error-switched} depends on the
state according to \eqref{eq:H} and always creates the smallest
possible error. In order to take the advantage of this property note that the set $\mathcal{F}$ has a specific structure known
as \emph{independent row uncertainty}, formally defined below.
\begin{defn}[Independent Row Uncertainty \cite{blondel2010jsr}]
  A set \mmo{of matrices} $\mathcal{M} \subset \real^{n\times n}$ {has} independent row uncertainties if
  \begin{align*}
    \mathcal{M} = \left\{\begin{bmatrix}a_1^\top & \cdots & a_n^\top\end{bmatrix}^\top
    \ : \ a_i \in \mathcal{M}_i, \ i \in \until{n}\right\},
  \end{align*}
  where all sets $\mathcal{M}_i \subset \real^{1\times n}$ are compact.
\end{defn}
\mmo{Next, we restate the following lemma on the spectral properties of
the sets with independent row uncertainties, that} will be used later in our stability analysis
of system \eqref{eq:error-switched}.
\begin{lem}\label{lem:lower-spectral-radius}\cite[Lemma 2]{blondel2010jsr}
 \mmo{Suppose} $\mathcal{M} \hspace{-.1cm} \subset \hspace{-.1cm} \real^{n\times n}$ \mmo{has}
  independent row uncertainties. \mmo{Then,} there exists $M_* \hspace{-.1cm}\in \hspace{-.1cm}\mathcal{M}$ such
  that
    $\rho(M_*) = \min_{M\in\mathcal{M}}\rho(M) =
    \lim\limits_{k\to\infty}\big{(}\min\limits_{M_i\in\mathcal{M},\\ i \in \until{k}}
    \|M_1\cdots M_k\|^\frac{1}{k}\big{)}.$
  The latter is known as the \emph{lower spectral radius} of \mmo{the set of matrices} $\mathcal M$.
\end{lem}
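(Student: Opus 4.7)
The plan is to prove the lemma in three steps, corresponding to the three assertions: existence of a minimizer $M_*$, well-definedness of the limit, and equality of $\rho(M_*)$ with this limit. The first step is a direct compactness argument: since $\mathcal{M} = \mathcal{M}_1 \times \cdots \times \mathcal{M}_n$ is a Cartesian product of compact row sets, it is compact in $\real^{n\times n}$, and continuity of $\rho$ yields a minimizer $M_* \in \mathcal{M}$. The second step follows from Fekete's lemma applied to $a_k \triangleq \min_{M_i \in \mathcal{M},\, i \in \until{k}} \|M_1 \cdots M_k\|$: the submultiplicativity of the operator norm gives $a_{k+\ell} \leq a_k a_\ell$, so $\log a_k$ is subadditive and $a_k^{1/k}$ converges to $\inf_k a_k^{1/k} \triangleq \check\rho(\mathcal{M})$.

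The crux is the third step, $\rho(M_*) = \check\rho(\mathcal{M})$. The inequality $\check\rho(\mathcal{M}) \leq \rho(M_*)$ is immediate by applying Gelfand's formula to the constant sequence $M_i = M_*$, which gives $\check\rho(\mathcal{M}) \leq \lim_{k\to\infty} \|M_*^k\|^{1/k} = \rho(M_*)$. For the reverse inequality, I plan to use the IRU hypothesis to build a single stationary matrix $\tilde M \in \mathcal{M}$ satisfying $\rho(\tilde M) \leq \check\rho(\mathcal{M})$, which then forces $\rho(M_*) \leq \rho(\tilde M) \leq \check\rho(\mathcal{M})$. The construction of $\tilde M$ proceeds row-by-row: IRU allows independently selecting each $(\tilde M)_r \in \mathcal{M}_r$. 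In the non-negative matrix setting relevant to our application (where $\hat A$ is non-negative by construction and $H_k$ is binary, so every element of $\mathcal{F}$ is non-negative), the natural selection rule comes from a Perron--Frobenius-type argument: define the monotone, positively-homogeneous operator $T: \real^n_{\ge 0} \to \real^n_{\ge 0}$ by $T(v)_r \triangleq \min_{\alpha \in \mathcal{M}_r} \alpha v$, identify a non-negative eigenvector $v^*$ whose eigenvalue $\lambda^*$ matches $\check\rho(\mathcal{M})$ (via the equivalence between row-wise minimization at $v^*$ and arbitrarily long product minimizations), extract minimizers $\alpha_r^* \in \mathcal{M}_r$ achieving $\alpha_r^* v^* = \lambda^* v^*_r$, and assemble $\tilde M$ from these rows. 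Then $\tilde M v^* = \lambda^* v^*$ immediately gives $\rho(\tilde M) \leq \lambda^*$.

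The main obstacle lies in making the row-by-row construction rigorous beyond the non-negative setting, where Perron--Frobenius does not apply directly. For a general signed-entry $\mathcal{M}$, a standard workaround is to majorize $|M_1 \cdots M_k|$ componentwise by $|M_1|\cdots|M_k|$ and transfer spectral bounds to the non-negative dominating sets, but verifying that the minimizer survives this passage requires care. Since the result is cited from Blondel and Nesterov and our application only invokes it for non-negative matrices, the cleanest path is to either defer to the cited lemma or present the Perron--Frobenius-based proof specialized to the non-negative setting that Lemma~\ref{lem:switched} actually produces.
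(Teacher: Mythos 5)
The paper does not prove this lemma at all; it is imported verbatim as \cite[Lemma 2]{blondel2010jsr}, so there is no in-paper argument to compare against, and your closing suggestion to defer to the citation is exactly what the authors do. Within your reconstruction, steps one and two are sound (compactness of $\mathcal{M}_1\times\cdots\times\mathcal{M}_n$ plus continuity of $\rho$; Fekete applied to $a_{k+\ell}\le a_k a_\ell$), as is the easy direction $\check\rho(\mathcal{M})\le\rho(M_*)$ obtained by restricting to constant sequences and invoking Gelfand.

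The gap is in the reverse inequality. The claim that $\tilde M v^*=\lambda^* v^*$ ``immediately gives'' $\rho(\tilde M)\le\lambda^*$ is false when the nonnegative eigenvector is allowed to sit on the boundary of the orthant: take $\tilde M=\operatorname{diag}(1,2)$ and $v^*=(1,0)^\top$, so $\lambda^*=1$ while $\rho(\tilde M)=2$. The implication requires $v^*>0$ strictly, so that $\lambda^*$ equals the $v^*$-weighted induced infinity norm of $\tilde M$; producing a strictly positive eigenvector for the min-operator $T(v)_s=\min_{\alpha\in\mathcal{M}_s}\alpha v$ needs an irreducibility hypothesis or an $\varepsilon$-perturbation-and-limit argument that you do not supply. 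Moreover, the existence of an eigenvector of $T$ whose eigenvalue equals the lower spectral radius---which you dispatch with ``via the equivalence between row-wise minimization at $v^*$ and arbitrarily long product minimizations''---is essentially the content of the cited lemma itself; what comes cheaply is only the componentwise identity $\min_{M_1,\dots,M_k}(M_1\cdots M_k v)=T^k(v)$, valid precisely because the rows are nonnegative and independently selectable, after which a nonlinear Perron--Frobenius theorem is still required to extract $v^*$. You are right to flag, and it deserves emphasis, that the statement as transcribed omits the nonnegativity hypothesis present in the original and that the argument genuinely breaks for signed row sets; since the paper only ever applies the lemma to the nonnegative family $\mathcal{F}$ constructed in Lemma~\ref{lem:switched}, nothing downstream is affected, but a self-contained proof would have to either add that hypothesis or justify the reduction via $|M_1\cdots M_k|\le|M_1|\cdots|M_k|$, which, as you note, does not obviously preserve the minimizer.
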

We can now state our first main stability result.
\begin{thm}[Necessary and Sufficient Conditions for Stability]\label{thm:stability}
  The error system \eqref{eq:error-switched} is globally exponentially
  stable \mmo{if and only if} there exists \mmo{$H_* \in \mathbb{R}^{2nN \times 2nN}$} which can be constructed according
  to \eqref{eq:H} for some values of $\underline x$ and
  $\overline{x}$, such that the matrix $H_*\hat A$ is Schur
  stable. Consequently, the $\DIO$ algorithm is ISS \mmo{if and only if such an $H_*$ exists}.
\end{thm}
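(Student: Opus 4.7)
The plan is to exploit the independent-row-uncertainty structure of $\mathcal F$ together with Lemma~\ref{lem:lower-spectral-radius}, leveraging two facts that make the analysis tractable: the matrices in $\mathcal F$ are entrywise non-negative (built from $\tilde A^{j+}$ and $\tilde A^{j-}$), and the collective framer errors $e_k$ are entrywise non-negative by construction. The key observation is that the switching $H_k$ in \eqref{eq:H} performs a componentwise minimization: each row of $H_k\hat A$ is chosen from the option set for that row so as to minimize the corresponding component of $e_{k+1}$. Combined with non-negativity, this yields the comparison inequality $0\le e_{k+1}\le M\, e_k$ for every $M\in\mathcal F$.

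For sufficiency, suppose $H_*\hat A=M_*\in\mathcal F$ is Schur. Iterating the comparison gives $0\le e_k\le M_*^k e_0$, and since $M_*$ is non-negative and Schur, $\|M_*^k\|$ decays exponentially, so $\|e_k\|\to 0$ at an exponential rate and \eqref{eq:error-switched} is GES. For necessity, suppose the error system is GES with rate $\alpha\in(0,1)$. Initializing with $e_0=\mathbf 1$ and letting $M_0^*,M_1^*,\dots\in\mathcal F$ be the matrices induced by the switching along this trajectory, we have $e_k=M_{k-1}^*\cdots M_0^*\mathbf 1$, and GES yields $\|M_{k-1}^*\cdots M_0^*\mathbf 1\|_\infty\le c\alpha^k$. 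Because the matrices are non-negative, $\|M\|_\infty=\|M\mathbf 1\|_\infty$, so
\begin{align*}
\lim_{k\to\infty}\min_{M_i\in\mathcal F}\|M_1\cdots M_k\|_\infty^{1/k}\le \alpha<1.
\end{align*}
Lemma~\ref{lem:lower-spectral-radius} then identifies this limit with $\min_{M\in\mathcal F}\rho(M)=\rho(M_*)$, providing $M_*\in\mathcal F$ with $\rho(M_*)<1$. Since the IRU structure of $\mathcal F$ captures all admissible per-row neighbor selections of \eqref{eq:H} (any neighbor assignment per row is realizable by suitable framer values), this $M_*$ corresponds to an admissible $H_*$ constructible via \eqref{eq:H}.

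For the ISS claim, reinstating the noise in \eqref{eq:observer} yields $e^0_{k+1}=\hat A e_k+\eta_k$, where $\eta_k\ge 0$ is componentwise bounded by a $\mathcal{K}_\infty$-function of $|\Delta_k|$, and then $e_{k+1}=H_k e^0_{k+1}\le M_* e_k+\eta_k'$ with $\eta_k'$ inheriting the same bound. Schurness of the non-negative $M_*$ converts this into the ISS bound of Definition~\ref{defn:stability} via standard variation-of-constants and geometric-series arguments, yielding the $\mathcal{KL}$ and $\mathcal{K}_\infty$ components. The main obstacle is the necessity direction: one must convert the exponential decay of a single state-dependent matrix sequence into existence of a \emph{constant} Schur matrix in $\mathcal F$, and this conversion is precisely what the independent-row-uncertainty structure of $\mathcal F$, through Lemma~\ref{lem:lower-spectral-radius}, delivers.
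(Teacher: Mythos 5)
Your proof follows essentially the same route as the paper's: a componentwise comparison $0\le e_{k+1}\le M e_k$ for all $M\in\mathcal F$ gives sufficiency and (with the noise reinstated) the ISS bound, while Lemma~\ref{lem:lower-spectral-radius} on the lower spectral radius of the independent-row-uncertainty set $\mathcal F$ gives necessity. You in fact supply more detail than the paper in the necessity direction (the explicit estimate $\|M_{k-1}^*\cdots M_0^*\|_\infty=\|e_k\|_\infty\le c\alpha^k$ starting from $e_0=\mathbf 1$), and your parenthetical claim that every per-row neighbor assignment is realizable via \eqref{eq:H} is the same implicit identification the paper makes when it equates the minimizer $M_*\in\mathcal F$ with a matrix of the form $H_*\hat A$.
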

\begin{proof}
  We first prove sufficiency and then necessity. Assume
  $H_*\hat{A}$ is Schur stable. Consider the comparison system $\tilde
  e_{k+1} = H_* \hat{A} \tilde e_k$ with initial condition $\tilde e_0
  = e_0$.  By the construction of
  $H_k$ in \eqref{eq:H}, which implies $H_*\hat{A}e_k \ge
  H_k\hat{A}e_k$, $\tilde e_k \ge e_k \ge 0$ for all $k \ge 0$ by
  induction.  Therefore, by comparison, \eqref{eq:error-switched} is
  globally exponentially stable.
  To prove
  necessity, note that \eqref{eq:error-switched} is 
  asymptotically stable only if the lower spectral radius of $\mathcal{F}$ is less
  than 1. By Lemma \ref{lem:lower-spectral-radius}, this implies
  existence of a stable $F_*= H_*\hat A$.
  Finally, having studied stability of the noise-free system, we now
  study the ISS property  of the noisy system:
\begin{align}\label{eq:error-noisy}
    e_{k+1} = H_k\hat A e_k + H_k(W_k + V_k),\
    \text{ where}
    \end{align}
    $W_k \hspace{-.1em}\triangleq [(\Lambda^{1}_k)^\top\dots (\Lambda^{N}_k)]^\top, \Lambda_k^i \triangleq \begin{bmatrix}(T^i B)^+\underline{s}_k + (T^i B)^-\overline{s}_k \\
      (T^i B)^-\underline{s}_k + (T^i
      B)^+\overline{s}_k \end{bmatrix},$ with
    $\underline{s}_k \triangleq w_k - \underline{w}$,
    $\overline{s}_k \triangleq \overline{w} - w_k$ and $V_k$ is defined
    similarly to $W_k$, with $(L^iD^i)^* + (\Gamma^iD^i)^*$ replacing
    $(T^iB)^*$, for $*\in\{+,-\}$ and $\underline{v}^i$,
    $\overline{v}^i$, and $v^i_k$ replacing $\underline{w}$, $\overline{w}$, and $w_k$
    respectively.
  As before, we can use the comparison system
  \begin{align}\label{eq:iss-comp}
    \tilde e_{k+1} = H_*\hat A \tilde e_k + H_*(W_k + V_k), \quad \tilde e_0 = e_0
  \end{align}
  It is well known that stable LTI systems are ISS \cite{sontag2008input}. Again, \eqref{eq:H} guarantees $\tilde e_k
  \ge e_k \ge 0 \ \forall k \ge 0$ by induction, regardless of the values of $W_k$ and $V_k$. By this comparison, the ISS property of the system
  \eqref{eq:iss-comp} implies that \eqref{eq:error-noisy} is ISS.
\end{proof}
Theorem~\ref{thm:stability} is only an \emph{existence} result.  It
does not provide a method for constructing $H_*$, which could be a
difficult combinitorial problem. Therefore, in the next section we
provide \mmo{a tractable approach} that allows for the
computation of stabilizing gains and the corresponding $H_*$ \mmo{in Theorem~\ref{thm:stability}}.

\subsection{Distributed Stabilizing of the Error Dynamics}
In this subsection, we show that the ISS property formalized in
Theorem~\ref{thm:stability} can be tractably verified in a
constructive and distributed manner.  The approach is motivated by the
representation~\eqref{eq:error-switched}, in which $H_k$ exchanges
rows of $\hat A$ to achieve the best possible estimate. The main
idea is that each agent, depending on its observation of the system,
contributes to stabilizing the state trajectory in \emph{some}, not
necessarily all, dimensions.  To guarantee stability using our design
approach, we will use the following assumption, which characterizes
the interplay between the agents' local observations and their
communication over the network.
\begin{assumption}[Collective Positive Detectability over
  Neighborhoods (CPDN)]\label{ass:col_pos_det_neigh}
  There is a $d^*\in\natural$ such that for every state
  $s \in \until{n}$ and every agent $i\in\nodes$, there is an agent
  $\ell(i,s)\in\neighbors_i^{d^*}$ such that there exist gains
  $T^{\ell(i,s)}$, $L^{\ell(i,s)}$, and $\Gamma^{\ell(i,s)}$
  satisfying
  $\|(T^{\ell(i,s)}A-L^{\ell(i,s)}C^{\ell(i,s)})_s\|_1 < 1$.
\end{assumption}
\mmo{Assumption~\ref{ass:col_pos_det_neigh}} captures a broad range
of conditions on the system and graph structure that can result in a stable observer.
\mmo{It} is \mmo{also}
different from the well-known notion of ``collective observability"
\cite{wang2022robust}, which is not sufficient here due to the non-negativity of the
error dynamics in the interval observer design.
Furthermore, \mmo{the conditions in Assumption~\ref{ass:col_pos_det_neigh}}, can be
verified using a distributed procedure described in Algorithm~\ref{alg2}, \mmo{summarized here.} First, each node $i\in\nodes$
independently solves the \mmo{following} linear program:
\begin{align}\label{eq:stab_LMI}
  \begin{array}{rl}
    &\min\limits_{\{E^i,L^i,T^i,\Gamma^i\}} \quad \sum_{j=1}^n \sum_{t=1}^n E^i_{jt} \\
    &\mathrm{s.t.}\  -E^i \leq T^iA-L^iC^i \leq E^i,  T^i = I_n - \Gamma^iC^i.
  \end{array}
\end{align}
Then, nodes exchange their $\tilde A^i$ matrices with increasingly
larger neighborhoods until stabilizing agents are found for every state.
\mmo{The following lemma formalizes this result}.
\begin{algorithm}
  \caption{$\DIO$ initialization at node $i$.}
  \label{alg2}
  \begin{algorithmic}[1]
    \renewcommand{\algorithmicrequire}{\textbf{Input:}}
    \renewcommand{\algorithmicensure}{\textbf{Output:}}
    \Require $A$, $C^i$, $\neighbors_i$;
    \textbf{Output:} $L^{i}$, $T^{i}$, $\Gamma^{i}$, $d^*$
    \State Compute $L^{i,*}$ $\Gamma^{i,*}$, and $T^{i,*}$ by solving \eqref{eq:stab_LMI};
    \State $\mathcal{Q}_i \gets \{T^{i,*}A - L^{i,*}C^{i,*}\}$; $d^* \gets 1$;
    \While{$d^* \le \diam \graph$}
    \If{$\forall s \in \until{n}$, $\exists P \in \mathcal{Q}_i$ s.t. $\|(P)_s\|_1 < 1$}
    \State \textbf{break}
    \EndIf
    \State Send $\mathcal{Q}_i$ to $j\in\neighbors_i$
    and recieve $\mathcal{Q}_j$ from $j\in\neighbors_i$;
    \State $\mathcal{Q}_i \gets \bigcup_{j\in\neighbors_i} Q_j$;
     $d^* \gets d^* + 1$;
    \EndWhile
    \State \textbf{for} {$t = 1$ to $\diam\graph$} \textbf{do} $d_i^* \gets \max_{j\in\neighbors_i}d_j^*$ \textbf{end for}
    \Return $L^{i,*}$, $T^{i,*}$, $\Gamma^{i,*}$, $d_i^*$
  \end{algorithmic}
\end{algorithm}

\begin{lem}
Assumption~\ref{ass:col_pos_det_neigh} holds if and only if Algorithm~\ref{alg2} returns $d^* \leq \diam\graph$.
\end{lem}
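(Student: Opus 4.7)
I would prove both directions of the equivalence. The central structural observation is that the linear program in \eqref{eq:stab_LMI} decouples across the rows of its decision variables, so its optimizer is a per-row minimizer of $\|(T^i A - L^i C^i)_s\|_1$.

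First, I would establish this row-wise optimality of the LP. The constraints $-E^i \le T^i A - L^i C^i \le E^i$ together with $T^i = I_n - \Gamma^i C^i$ involve only the $s$-th rows of $T^i$, $L^i$, $\Gamma^i$, and $E^i$ when restricted to row $s$, and the objective $\sum_{j,t} E^i_{jt}$ splits additively into row contributions. Hence the LP is equivalent to $n$ independent per-row $\ell_1$-minimizations, and the returned gains $(L^{i,*}, \Gamma^{i,*}, T^{i,*})$ satisfy
\begin{equation*}
\|(\tilde A^{i,*})_s\|_1 = \min \|(T^i A - L^i C^i)_s\|_1,
\end{equation*}
where the minimum is taken over all gains with $T^i = I_n - \Gamma^i C^i$.

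Next, I would prove by induction on the while-loop iterate that, at the start of iteration $d \in \natural$ at node $i$, $\mathcal{Q}_i = \{\tilde A^{j,*} : j \in \neighbors_i^{d-1}\}$. The base case $d=1$ holds since $\mathcal{Q}_i = \{\tilde A^{i,*}\}$ and $\neighbors_i^0 = \{i\}$; the inductive step follows because after exchanging with 1-hop neighbors the new $\mathcal{Q}_i$ is the union over $j \in \neighbors_i$ of $\neighbors_j^{d-1}$-indexed matrices, which equals $\neighbors_i^{d}$. The final $\diam\graph$-round max-consensus in the last \textbf{for} loop correctly computes $d^* \triangleq \max_j d_j^*$ at every node.

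For the forward direction ($\Rightarrow$), if Algorithm~\ref{alg2} returns $d^* \le \diam\graph$ at every node, then each node $i$ exited via the break condition: for every $s$ some $P \in \mathcal{Q}_i$ satisfies $\|P_s\|_1 < 1$. By the invariant, $P = \tilde A^{\ell(i,s),*}$ for some $\ell(i,s) \in \neighbors_i^{d^*}$, which directly witnesses Assumption~\ref{ass:col_pos_det_neigh}. For the backward direction ($\Leftarrow$), suppose CPDN holds with parameter $\hat d^*$. Since $\neighbors_i^d$ is nondecreasing in $d$ and $\neighbors_i^{\diam\graph}$ contains every node, we may take $\hat d^* \le \diam\graph$. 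For each $i, s$, CPDN provides an $\ell(i,s)\in\neighbors_i^{\hat d^*}$ together with \emph{some} gains achieving $\|(T^{\ell(i,s)}A - L^{\ell(i,s)}C^{\ell(i,s)})_s\|_1 < 1$; by the row-wise LP optimality, the LP solution at node $\ell(i,s)$ attains at most this value, so $\|(\tilde A^{\ell(i,s),*})_s\|_1 < 1$. By the invariant, $\tilde A^{\ell(i,s),*} \in \mathcal{Q}_i$ no later than iteration $\hat d^* + 1$, so the break triggers and the algorithm returns $d^* \le \diam\graph$ after the final max-consensus.

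The main obstacle I expect is the row-wise decoupling: this is exactly what lets a single LP per node serve \emph{all} states simultaneously, and it is indispensable for the backward direction since otherwise a jointly optimal gain at node $\ell(i,s)$ might fail to achieve the $<1$ bound on row $s$ even when some stabilizing choice exists. Everything else is bookkeeping: the invariant on $\mathcal{Q}_i$ and the correctness of the final max-consensus round.
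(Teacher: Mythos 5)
Your proof is correct and follows the same basic outline as the paper's (the paper's own proof is four sentences: if CPDN fails at some node the break condition is never met, so that node sets $d_i^* = \diam\graph + 1$ and max-consensus propagates this; if CPDN holds the break fires within $\diam\graph$ iterations at every node). What you add, and what the paper's proof of this lemma silently relies on, is the row-wise decoupling of the LP \eqref{eq:stab_LMI}: Assumption~\ref{ass:col_pos_det_neigh} only asserts that \emph{some} gains at node $\ell(i,s)$ make $\|(\tilde A^{\ell(i,s)})_s\|_1 < 1$, whereas Algorithm~\ref{alg2} tests the \emph{specific} LP-optimal gains, so the backward direction genuinely needs your observation that the $s$-th row of $T^iA - L^iC^i$ depends only on the $s$-th rows of $L^i$ and $\Gamma^i$, hence the single LP minimizes every row's $1$-norm simultaneously. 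The paper does state this decoupling fact, but only later, inside the proof of Theorem~\ref{thm:suff_stability}; your proof is therefore more self-contained at the cost of some length. Your explicit invariant $\mathcal{Q}_i = \{\tilde A^{j,*} : j \in \neighbors_i^{d-1}\}$ is also a useful formalization of what the paper leaves implicit. One point both you and the paper gloss over: once a node hits the break it stops exchanging, which could in principle starve a neighbor that has not yet broken; this does not affect correctness of the stated equivalence but is worth a remark if you care about the algorithm as literally written.
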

\begin{proof}
  Assume Assumption~\ref{ass:col_pos_det_neigh} does not hold for some
  agent $i$. Then the condition in line 4 of Algorithm~\ref{alg2} will
  never be met, resulting in $d_i^* = \diam{\graph} + 1$. After the
  max consensus on line 10, all agents will return $d^*=\diam\graph + 1$. On the other hand, if
  Assumption~\ref{ass:col_pos_det_neigh} holds, the condition in line
  4 will be met after less than $\diam{\graph}$ iterations for every node.
\end{proof}
\mmo{Finally, we show that the solutions to the LP in \eqref{eq:stab_LMI} are the corresponding stabilizing observer gains.}
\begin{thm}[Distributed Interval
  Observer Design]\label{thm:suff_stability}
  Suppose Assumption \ref{ass:col_pos_det_neigh} holds.  Then the
  $\DIO$ algorithm is ISS with $d = d^*$ and the corresponding observer
  gains ${L}^{*,i}$, ${T}^{*,i}$, and ${\Gamma}^{*,i}$ that are
  solutions to \eqref{eq:stab_LMI}.
\end{thm}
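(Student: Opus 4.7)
The plan is to invoke Theorem~\ref{thm:stability} by explicitly exhibiting a valid selection matrix $H_*$ for which $H_*\hat A$ is Schur stable, using the observer gains returned by the LP~\eqref{eq:stab_LMI}. The bridge between Assumption~\ref{ass:col_pos_det_neigh} and Theorem~\ref{thm:stability} is the observation that the CPDN condition is precisely what is needed to make some selection of rows of $\hat A$ produce a matrix whose induced $\infty$-norm is strictly less than one.

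First, I would analyze the LP~\eqref{eq:stab_LMI}. Its objective equals $\sum_{s}\|(\tilde A^i)_s\|_1$ at any optimum, since $|T^iA-L^iC^i|\le E^i$ is tight elementwise at the minimizer. Since $T^i = I_n-\Gamma^iC^i$, the $s$-th row of $\tilde A^i$ is $A_s - \Gamma^i_sC^iA - L^i_sC^i$, which depends only on the $s$-th rows of $L^i$ and $\Gamma^i$. The LP therefore decouples across $s$, so the optimum $L^{*,i},\Gamma^{*,i}$ minimizes the $1$-norm of each row of $\tilde A^i$ independently. Combined with Assumption~\ref{ass:col_pos_det_neigh}, this implies that for every pair $(i,s)$, the LP solved at agent $\ell(i,s)$ returns gains with $\|(\tilde A^{\ell(i,s)})_s\|_1 < 1$.

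Next, I would construct $H_*$ as follows: for each $(i,s)$, set the minimizer of~\eqref{eq:H} to $j^* = \ell(i,s)$ for both the lower and upper framer rows. Because $d=d^*$ and $\ell(i,s)\in\neighbors_i^{d^*}$, this is an admissible selection, and it is realized by~\eqref{eq:H} for any state configuration in which the $\ell(i,s)$-th framer coordinate is a strict extremum among $d^*$-hop neighbors (which is always attainable). Finally, I would bound $\rho(H_*\hat A)$ via its induced $\infty$-norm. Because $(\tilde A^j)^+$ and $(\tilde A^j)^-$ have disjoint supports, both rows $s$ and $n+s$ of $\hat A^j$ have $1$-norm equal to $\|(\tilde A^j)_s\|_1$. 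Hence every row of $H_*\hat A$ has $1$-norm $\|(\tilde A^{\ell(i,s)})_s\|_1 < 1$, so $\|H_*\hat A\|_\infty < 1$, and therefore $\rho(H_*\hat A)<1$. Theorem~\ref{thm:stability} then yields ISS of the $\DIO$ algorithm with the stated gains and $d=d^*$.

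The main obstacle I anticipate is the row-decoupling step for the LP: one must carefully verify that minimizing the aggregate objective in~\eqref{eq:stab_LMI} does not sacrifice any individual row, so that a ``local'' CPDN hypothesis about the existence of \emph{some} gains making one row small actually carries over to the \emph{specific} gains produced by the LP. Once the decoupling through $T^i=I_n-\Gamma^iC^i$ is nailed down, the remainder is a clean Gershgorin/row-sum bound followed by invoking Theorem~\ref{thm:stability}.
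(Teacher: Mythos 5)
Your proposal is correct and follows essentially the same route as the paper's own proof: construct $H_*$ by selecting row $\ell(i,s)$ from Assumption~\ref{ass:col_pos_det_neigh} for each $(i,s)$, use the row-wise decoupling of the LP~\eqref{eq:stab_LMI} (each row of $\tilde A^i$ depends only on the corresponding rows of $L^i$ and $\Gamma^i$) to conclude $\|(\tilde A^{\ell(i,s)})_s\|_1<1$, bound $\rho(H_*\hat A)\le\|H_*\hat A\|_\infty<1$, and invoke Theorem~\ref{thm:stability}. The only difference is that you spell out the decoupling and the equality $\|(\hat A^i)_s\|_1=\|(\tilde A^i)_s\|_1$ in slightly more detail than the paper does.
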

\begin{proof}
  We will construct $H_*$, which by Theorem \ref{thm:stability} is
  sufficient for the ISS property to hold. For each node $i\in\nodes$
  and state $s \in \until{n}$, using $\ell(i,s)$ from Assumption
  \ref{ass:col_pos_det_neigh},
\begin{align*}
    (H_*)_{\idh(i,s),\idh(\ell(i,s),s)} = 1, \  \
    (H_*)_{\idl(i,s),\idl(\ell(i,s),s)} = 1,
\end{align*}
and all other entries are zero. Since $\ell(i,s) \in
\neighbors_i^{d^*}$, 
then $H_*$ could be constructed according to \eqref{eq:H} for some
$\underline{x}$ and $\overline{x}$. With $H_*$ defined as such, rows
$\idl(i,s)$ and $\idh(i,s)$ of
$H_*\hat A$ are equal to rows $\idl(\ell(i,s),s)$
and $\idh(\ell(i,s),s)$ of $\hat A$,
respectively (cf. Lemma~\ref{lem:switched}). From the definition of $\hat A^i$, it is clear that
$\|(\tilde A^i)_s\|_1$ = $\|(\hat A^i)_s\|_1$ = $\|(\hat
A^i)_{s+n}\|_1$. 
  Note that the gains $T^i$ and $L^i$ are computed
by \eqref{eq:stab_LMI}, which independently minimize the $1$-norm of
each row of $\tilde A^{i}$, since the $\supscr{s}{th}$ rows of $T^i$ and $L^i$
only affect the $\supscr{s}{th}$ row of \scott{$\tilde A^i \triangleq T^iA-L^iC^i$}. Moreover, Assumption
\ref{ass:col_pos_det_neigh} guarantees $\|(\tilde A^{\ell(i,s)})_s\|_1
< 1$ for each $s$. All of this implies $ \|(H_*\hat
A)_{\idl(i,s)}\|_1 = \|(H_*\hat
A)_{\idh(i,s)}\|_1 < 1$.
Since this holds for every row of the matrix $H_*\hat A$, then $\rho(H_*\hat A) \leq \|H_*\hat A\|_\infty \triangleq \max\limits_{1\leq i \leq 2nN} \sum_{s=1}^{2nN} |(H_*\hat A)_{ij}|< 1$. 
\end{proof}
\vspace{-.15cm}
\scott{Even if Assumption \ref{ass:col_pos_det_neigh} doesn't hold,
  or if the algorithm is run with $d < d^*$, it is possible that
  solving \eqref{eq:stab_LMI} will result in stabilizing gains which can
  be verified by Theorem~\ref{thm:stability}.}

\section{Numerical Examples}\label{sec:example}
This section demonstrates the effectiveness of our distributed
interval observer applied to two LTI target systems. \\
\noindent\textit{Example 1:}
Consider a system from \cite{li2021ipr} in the form of \eqref{eq:system}:
\begin{align}\label{eq:ex_1}
\begin{array}{c}
    N = 3,\\
    n = 4, \\
    B = I_4,
\end{array}
A = \begin{bmatrix}
1 & 0 & 1 & 0 \\
0 & 1 & 0 & 1 \\
0 & 0 & 1 & 0 \\
0 & 0 & 0 & 1
\end{bmatrix},
\begin{array}{c}
    C^1 = \begin{bmatrix}1 & 0 & 0 & 0\end{bmatrix}, \\[3pt]
    C^2 = \begin{bmatrix}0 & 1 & 0 & 0\end{bmatrix}, \\[3pt]
    C^3 = \begin{bmatrix}1 & 1 & 0 & 0\end{bmatrix}.
\end{array}
\end{align}
The network is fully connected, so $d=d^*=1$. The initial state $x_0$ and the process noise $w_k$ are bounded by
$\underline{x}_0 = \begin{bmatrix} -20 & -15 & -0.5 &
  0\end{bmatrix}^\top,\overline{x}_0 = \begin{bmatrix} 10 & 25 & 2 &
  3\end{bmatrix}^\top, \underline{w} = \begin{bmatrix}-0.1 & -0.1 & -1
  & -1\end{bmatrix}^\top$ and $\overline{w} = \begin{bmatrix}0.1 & 0.1
  & 1 & 1\end{bmatrix}^\top$, while $v^i_k$ is bounded by
$\underline{v}^i = -i$, $\overline{v}^i = i$, for $i=1,2,3$.  To show
the advantage of including a network update in the observer design, we
also implemented individual local observers for each agent $i$,
i.e., using only~\eqref{eq:observer} and local output $y^i_k$. The
gains of the local observers are the same as in the distributed
observer. Using CVX~\cite{cvx,Grant.08} to solve the design LPs in
\eqref{eq:stab_LMI}, we obtained gains for each node:

\vspace{-.35cm}
{\small
\begin{gather*}
    L^1 \hspace{-.1cm}=\hspace{-.1cm} \begin{bmatrix} 0 & 0 & -1 & 0\end{bmatrix}^\top\hspace{-.15cm},
    L^2 \hspace{-.1cm}=\hspace{-.1cm} \begin{bmatrix} 0 & 0 & 0 & -1\end{bmatrix}^\top\hspace{-.15cm},
    L^3 \hspace{-.1cm}=\hspace{-.1cm} -\begin{bmatrix} 0 & 0 & .5& .5\end{bmatrix}^\top\hspace{-.15cm},\\
    \Gamma^1 = \begin{bmatrix}1 & 0 & 1 & 0\end{bmatrix}^\top\hspace{-.15cm},
    \Gamma^2 = \begin{bmatrix} 0 & 1 & 0 & 1\end{bmatrix}^\top\hspace{-.15cm},
    \Gamma^3 = \begin{bmatrix}.5& .5& .5 & .5\end{bmatrix}^\top\hspace{-.15cm}, \\
T^1 \hspace{-.1cm}=\hspace{-.15cm} \begin{bmatrix}
    0      & 0 & 0 & 0 \\
    0 & 1 & 0 & 0 \\
   -1 & 0 & 1 & 0 \\
    0      & 0 & 0 & 1
    \end{bmatrix}\hspace{-.1cm},T^2 \hspace{-.1cm}=\hspace{-.15cm} \begin{bmatrix}
    1  &  0  & 0 & 0 \\
    0  &  0  & 0 & 0 \\
    0  &  0  & 1 & 0 \\
    0  &  -1  & 0 & 1
    \end{bmatrix}\hspace{-.1cm},T^3 = \begin{bmatrix}
    .5 & -.5 &       0   &      0 \\
   -.5 &  .5 &       0   &      0 \\
   -.5 & -.5 &  1        & 0 \\
   -.5 & -.5 &       0   & 1 \\
    \end{bmatrix}.
\end{gather*}
}
Figure~\ref{fig:ex1} shows the resulting framer trajectories
and interval widths for each observer. All observers maintain a \emph{correct} interval estimate, but some framers of the individual observers are clearly unstable. On the other hand, the distributed observer is able to maintain a tight interval around the true state trajectory, despite the unstable individual observers.\\
\noindent\textit{Example 2:}
Consider the following multi-agent system (a discretized version of
the example from \cite{wang2022robust}, with $\Delta_t = 0.01$) in the
form of \eqref{eq:system} with $N=6$, $n=12$, $A = I_6
\otimes \begin{bmatrix}
  1 & 0.01  \\
  0 & 1 \end{bmatrix}$, $B=I_6 \otimes \begin{bmatrix}
  0.01 & 0.0001  \\
  0 & .01 \end{bmatrix}$, and $C^i =
\mathbf{e}_i\otimes \begin{bmatrix}1 & 0 \end{bmatrix}$. The
communication graph $\graph$ is a directed ring with nodes $\nodes =
\until{6}$ and edges $\edges = \{(i, (i+1 \bmod{6})) \,:\,
i\in\nodes\}$. Moreover, the initial state is $x_0 =
\mathbf{1}_6\otimes \begin{bmatrix} 0.7032 & 0.0457 \end{bmatrix}$ and
the process noise is given by $w_k = \mathbf{1}_4
\otimes \begin{bmatrix} \frac{1}{2}\sin(0.01k) & 0 &
  \frac{1}{5}\cos(0.01k)\end{bmatrix}^\top$, $\overline{w} =
\mathbf{1}_4 \otimes \begin{bmatrix}\frac{1}{2} & 0 &
  \frac{1}{5}\end{bmatrix}^\top$, $\underline{w} = \mathbf{1}_4
\otimes \begin{bmatrix}-\frac{1}{2} & 0 &
  -\frac{1}{5}\end{bmatrix}^\top$.  The measurement noise is given by
$v^1_k = v^4_k = \frac{1}{5}\sin^2(0.01k)$, $v^2_k = v^5_k = 0$, and
$v^3_k = v^6_k = \frac{1}{3}\cos(0.02k)$, with bounds $\overline{v}^1
= \overline{v}^4 = \frac{1}{5}$, $\overline{v}^2 = \overline{v}^5 =
0$, $\overline{v}^3 = \overline{v}^6 = \frac{1}{3}$, $\underline{v}^1
= \underline{v}^2 = \underline{v}^4 = \underline{v}^5 = 0$, and
$\underline{v}^3 = \underline{v}^6 = -\frac{1}{3}$. The $\DIO$ gains
for each node $i \in \nodes$ are
    $L^i = \mathbf{e}_i \otimes [0 \ -100]^\top, \quad
    \Gamma^i = \mathbf{e}_i \otimes [1 \ 100]^\top,
    T^i = I_{12} + \mathbf{e}_i\mathbf{e}_i^\top \otimes \begin{bmatrix}-1 & 0 \\ -100 & 0\end{bmatrix}$.
    Assumption~\ref{ass:col_pos_det_neigh} is satisfied with
    $d^*=5$, but $d=1$ still results in a stable observer.
    Figure~\ref{fig:ex2} shows the state trajectories and interval
    widths from the DIO algorithm \scott{with $d=1$}.  For comparison, we also plot the
    results of the continuous time distributed observer proposed in
    the work \mmo{in} \cite{wang2022robust}.  As can be seen, our observer is
    able to maintain tighter intervals around the true states, which
    can be due to the advantage of using the equivalent system
    representation \eqref{eq:sys_equiv}, solving LPs instead of SDPs
    and/or leveraging a more general observer structure.
\section{Conclusion and Future Work} \label{sec:conclusion}
In this paper, a novel distributed interval observer synthesis was
introduced for linear time-invariant discrete-time systems that are
subject to bounded noise. At each time step, agents first individually
compute upper and lower framers for the system state, based on their
own observation of the system. Then, they communicate to their
neighbors, exchanging and updating their estimates iteratively by
selecting the tightest intervals among their agents' estimates. The
true state trajectory of the system was guaranteed to be bounded by
the updated framers. Furthermore, necessary and
sufficient conditions assuring collective stability were provided. Such conditions were shown to be collectively achievable by contribution of all agents through solving tractable linear programs in a distributed manner. \mmo{In addition, we provided an algorithm to verify if such sufficient conditions hold for a given system.}
 Finally,
the advantage of the proposed distributed design compared to its
centralized version and also versus a benchmark observer in the literature was demonstrated through simulation examples. In
future work, extension to nonlinear settings with
time-varying networks, and the
effect of adversarial
agents and communication and measurement failures will be considered.

\begin{figure}[t!]
  \centering
  {\includegraphics[width=0.98\columnwidth]{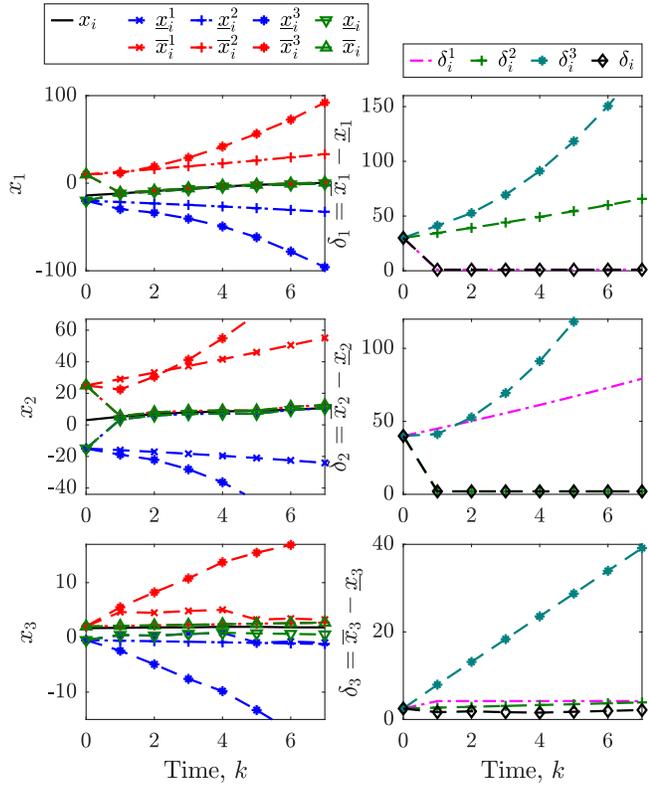}}
  \caption{{\small States $x_1$ through $x_3$ (black), along with the upper (red) and lower (blue) framers (left) and interval widths $\delta_i \triangleq \overline{x}_i- \underline{x}_i$ (right) from the observers, for system \eqref{eq:ex_1}.
  Superscripts \mmo{$1$--$3$} denote framers from individual agents \mmo{$1$--$3$}, respectively, while the green marker (left) denotes the resulting estimate by DIO.}}
  \label{fig:ex1}
\end{figure}
\begin{figure}[t!]
  \centering
  {\includegraphics[width=0.98\columnwidth]{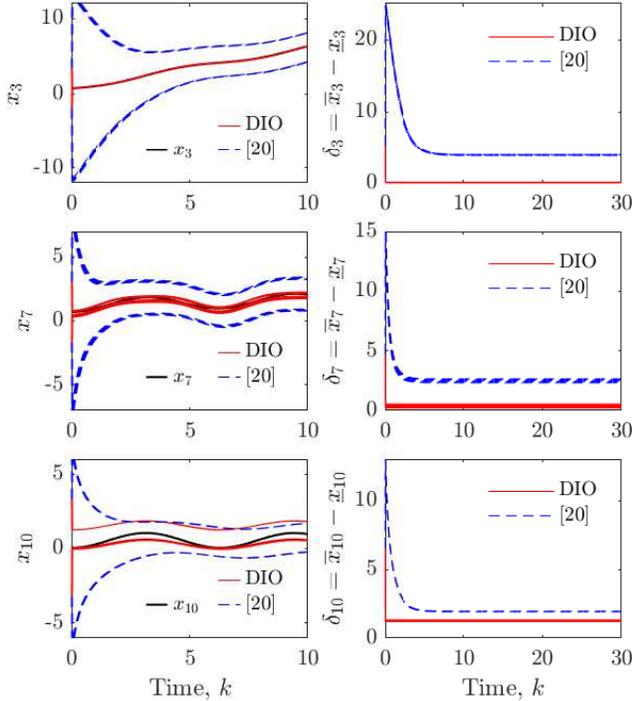}}
  \caption{{\small State framers and interval widths from all agents for
    Example 2, from $\DIO$ (red) and the observer from \cite{wang2022robust} (blue).}}
  \label{fig:ex2}
\end{figure}

\bibliographystyle{unsrturl}
{\tiny
  \bibliography{biblio}
}

\end{document}